\documentclass[a4paper,11pt]{article}
\usepackage[sort]{cite}
\usepackage{bookmark}
\usepackage[utf8]{inputenc} 
\usepackage{bussproofs}
\usepackage{enumerate}
\usepackage{enumitem} 
\usepackage[T1]{fontenc}  
\usepackage{amsmath,amssymb,amsthm}
\usepackage{mathtools}
\usepackage{enumitem}
\usepackage{hyperref}
\usepackage{stmaryrd}

\usepackage{lmodern} 
\usepackage{graphicx}
\usepackage{hyperref}
\usepackage{amssymb}
\usepackage{amsmath}
\usepackage{amsthm}
\usepackage{cite}
\usepackage{xcolor}
\usepackage[onehalfspacing]{setspace}
\usepackage{eurosym}

\usepackage{amsmath,amssymb,amsthm,units, stmaryrd,stackrel,relsize,bm}

\usepackage{bussproofs}
\usepackage{proof}
\usepackage[all]{xy}

\newtheorem{theorem}{Theorem}[section]

\makeatletter
\def\Ddots{\mathinner{\mkern1mu\raise\p@
\vbox{\kern7\p@\hbox{.}}\mkern2mu
\raise4\p@\hbox{.}\mkern2mu\raise7\p@\hbox{.}\mkern1mu}}
\makeatother

\newcommand{\e}{\varepsilon}
\newcommand{\limp}{\rightarrow}

\usepackage{bussproofs}
\usepackage{proof}

\newtheorem{definition}{Definition}

\newtheorem{example}{Example}
\newtheorem{remark}{Remark}

\usepackage{xcolor}

\usepackage{forest}
\usepackage{amssymb,amsmath}

\newcounter{tabline}

\forestset{
  tableau/.style={
    for tree={
      grow'=south,
      reversed=false, 
      parent anchor=south,
      child anchor=north,
      align=left,
      edge={-},
      l sep=10mm,
      s sep=8mm,
      math content
    }
  }
}

\title{Towards a Proof-Theoretic Analysis of Incorrect/Incomplete Proofs
\thanks{This research was funded in part by the Austrian Science Fund (FWF) 10.55776/P36571.}
}

\author{
Matthias Baaz, Mariami Gamsakhurdia\\
Technische Universität Wien\\
A–1040 Vienna, Austria\\
\textit{baaz@logic.at},
  \textit{mariami@logic.at}
 }

\begin{document}

\maketitle              

\begin{abstract}
We investigate the proof-theoretic structure of incorrect/incomplete proofs, that is, derivations containing syntactic errors or incomplete inferential steps that nonetheless preserve partial semantic validity. Building on Hilbert’s 
epsilon calculus, we formalize how such derivations can be corrected through semantic projection and weakest preconditions, leading to valid Herbrand disjunctions. We show that the 
epsilon calculus provides a natural framework for analyzing tolerance of falsity in proofs and for identifying conditions under which an incorrect proof can be semantically repaired. This approach extends Hilbert’s program beyond correctness, toward a logic of error and recovery. Moreover we show that the extended first epsilon theorem is false-tolerant.

\end{abstract}
\bigskip
\noindent
\textbf{Keywords:} Epsilon calculus, Incorrect proofs, Incomplete proofs, Weakest preconditions, Herbrand disjunctions, Error correction, Proof theory.

\section{Introduction}
The notion of correctness has long been regarded as the boundary of what counts as a proof.
Yet, in practice, 
reasoning is rarely flawless, often precedes or exceeds formal correctness. 
In mathematics, proofs are often incorrect/incomplete, or based on false or insufficient premises: crucial steps are omitted, definitions are applied outside their strict scope, or intermediate claims are only heuristically justified \cite{Bookofproofs}, \cite{arneson2003witt}. 
Nevertheless, such proofs may capture a valid inferential structure that can later be repaired by filling in gaps or eliminating errors.

The aim of this paper is to develop a proof-theoretic framework capable of analyzing and correcting/completing such reasoning and to formalize this phenomenon in proof-theoretic terms.
We propose that Hilbert’s~$\varepsilon$-calculus, originally developed as a framework for eliminating quantifiers, provides an intrinsic mechanism for representing and repairing incorrect and/or incomplete reasoning. 
We investigate to what extent the syntactic elimination procedures of the ~$\varepsilon$-calculus retain information about the semantic defect of a non-valid ~$\varepsilon$-derivation.
In this sense, the~$\varepsilon$-calculus already anticipates a logic of semantic correction: a system that reconstructs from a possibly erroneous syntactic derivation a semantically valid statement.

The perspective developed here extends Hilbert’s program in a new direction.  
Instead of studying only the conditions under which a correct proof guarantees truth, we investigate the conditions under which an incorrect/incomplete proof can still yield truth once appropriately projected or corrected/completed.

We introduce the notion of \emph{weakest preconditions}, formalizing the minimal assumptions that render an incorrect proof valid and connecting them to propositional truth-table semantics.  
We then show that this mechanism built on $\varepsilon$-calculus captures a form of \emph{false tolerance}, absent in standard first-order logic.
The~$\varepsilon$-calculus thus becomes a calculus not only of existence but also of correction/completion, a system in which syntactic elimination can expose, preserve, or in some cases remove specific propositional defects of an erroneous derivation.


\section{Mathematical Motivation: Euler's Approach to the Basel Problem}\label{subsec:mathmotivation}

A well-known illustration of incomplete but profoundly insightful reasoning in mathematics is provided by Euler’s original argument for the Basel problem\cite{EulersSO}.
Euler became famous by deriving
\begin{equation}
 \sum^{\infty}_{n=1} \frac{1}{n^2} = \frac{\pi^2}{6}
\end{equation}
Let us follow Euler’s reasoning. Consider the polynomial of even degree
\begin{equation} \label{e2}
b_0 - b_1 x^2 + b_2 x^4 - \ldots + (-1)^n b_n x^{2n}
\end{equation}
If $b_n = 1$ it has the $2n$ roots $\pm \beta_1 , \ldots, \beta_n \not=0$ then (\ref{e2}) can be written as
\begin{equation}
(x - \beta_1 ) (x + \beta_1) \ldots (x - \beta_n ) (x + \beta_n )
\end{equation}
\begin{equation}
(-1)^n (\beta_1 -x) (\beta_1 + x) \ldots (\beta_n - x) (\beta_n +x )
\end{equation}
\begin{equation}
(-1)^n (\beta_{1}^2 - x^2 ) \ldots (\beta_{n}^2 - x^2)
\end{equation}
where $b_0 = (-1)^n \beta_{1}^2 \ldots \beta_{n}^2$
\begin{equation}\label{e3}
b_0 \left( 1 - \frac{x^2}{\beta_{1}^{2}} \right) \left( 1 - \frac{x^2}{\beta_{2}^2} \right) \ldots \left( 1 - \frac{x^2}{\beta_{n}^2} \right)
\end{equation}

    By comparing coefficients in (\ref{e2}) and (\ref{e3}) one obtains that
\begin{equation}\label{e4}
b_1 = b_0 \left( \frac{1}{\beta_{1}^{2}} + \frac{1}{\beta_{2}^{2}} + \ldots + \frac{1}{\beta_{n}^{2}} \right).
\end{equation}

Next Euler considers the Taylor series for $sin(x)$ divided by $x$
\begin{equation}\label{e5}
\frac{\mbox{sin }x}{x} = \sum_{n=0}^{\infty}(-1)^n \frac{x^{2n}}{(2n+1)\mbox{!}}
\end{equation}
which has as roots $\pm \pi, \pm 2\pi, \pm 3\pi, \ldots$.\\

Now by way of analogy Euler assumes\footnote{\textbf{Error}: Euler made an error in manipulating the infinite series without rigorously verifying its convergence (Euler assumed properties of finite polynomials hold for infinite series). He equated the Taylor expansion of \( \frac{\sin x}{x} \) with an infinite product. The correct value is $\zeta(2) = \frac{\pi^2}{6}$, proven later through Fourier analysis and more advanced techniques in analysis.} that the infinite degree polynomial (\ref{e5}) behaves in the same way as the finite polynomial (\ref{e2}). Hence in analogy to (\ref{e3}) he obtains
\begin{equation}\label{e6}
\frac{\mbox{sin } x}{x} = \left( 1- \frac{x^2}{\pi^2} \right) \left( 1- \frac{x^2}{4 \pi^2} \right) \left( 1- \frac{x^2}{9 \pi^2} \right) \ldots
\end{equation}
and in analogy to (\ref{e4}) he obtains
\begin{equation}\label{e7}
\frac{1}{3\mbox{!}} = \left( \frac{1}{\pi^2} + \frac{1}{4 \pi^2} + \frac{1}{9 \pi^2} + \ldots \right)
\end{equation}
which immediately gives
\begin{equation*}
 \sum^{\infty}_{n=1} \frac{1}{n^2} = \frac{\pi^2}{6}
\end{equation*}
Euler treated the function $\sin(x)$ as an infinite product and informally compared coefficients as if manipulating finite polynomials.  


From a modern standpoint, this argument is incorrect/incomplete: the passage from the power-series representation to an infinite product and the subsequent comparison of coefficients were not justified by the finite-polynomial argument alone.
Yet, when reconstructed in the framework of Weierstrass’s theory of entire functions, Euler’s reasoning can be seen as an \emph{anticipation} of valid results.
This historical episode illustrates the kind of \emph{semantic tolerance} we wish to capture proof-theoretically.  
Euler’s derivation contains syntactic errors (unjustified manipulations) but remains semantically meaningful: once projected into a more rigorous system, the argument yields a valid theorem.  
From a logical perspective, this corresponds to an \emph{incomplete proof} that can be transformed into a valid derivation.


\section{Weakest Preconditions}\label{sec:weakestpreconditions}
Given a formula $A$ occurring as the conclusion of an incorrect proof, we may ask:  
\emph{What minimal condition must hold for $B$ to be true, despite the proof’s defects?}  
The answer is the \emph{weakest precondition} of~$B$, denoted $\mathrm{wp}(B)$, representing the minimal assumption 
that restores the validity of the inference leading to~$B$.
The concept of \emph{weakest precondition} plays a central role in identifying minimal assumptions required to validate a proof. Intuitively,

\begin{definition}\label{def:weak}
    A precondition for a satisfiable formula  $B$ is any formula $A$ such that $A\rightarrow B$ is valid.
   We define a set $R$ of all satisfiable /admissible preconditions.
    We impose an order on $R$ among the satisfiable preconditions: $K < C$, meaning that $K\rightarrow C$ is valid but $C\rightarrow K$ is not valid.
A formula $A$ is a \emph{weakest precondition} for $B$ if:
\begin{itemize}[nosep]
    \item $A \rightarrow B$ is valid
    \item $A$ is satisfiable 
    \item $A$ is $<$- minimal among other satisfiable preconditions. 
    
    \end{itemize}
    
\end{definition}

Thus the weakest preconditions contain the reasonable minimal assumptions under which $B$ still holds. 
Such weakest preconditions are understood as preconditions implying other possible preconditions, and not implied by any of them, representing the simplest means of ensuring the correctness of a given argument.
\begin{remark}
We exclude $\bot$ (falsum) from being considered as a weakest precondition, as it trivially implies any formula, as $\bot \rightarrow B$ is always valid. Note that only 
$\bot$ has no precondition. The order $<$ might not be linear; there can be multiple incomparable weakest preconditions.

\end{remark}

This notion works well in propositional logic, over a fixed finite set of propositional variables, there are only finitely many formulae up to logical equivalence
, and hence minimal preconditions exist and can be computed via truth tables. In this setting, weakest preconditions correspond to conjunctions of literals that identify exactly one truth table line making the formula true.




\begin{example}
Given $S = (A \vee (B \rightarrow C)) \wedge (A \rightarrow B)$, we compute its truth table and identify the minimal (i.e., weakest) preconditions that make $S$ true. Each such conjunction of literals corresponding to a true line in the truth table is a weakest precondition.

   \begin{minipage}[t]{0.55\textwidth}
\centering
\begin{tabular}{|c|c|c|c|}
\hline
\textbf{A} & \textbf{B} & \textbf{C} & 
$(A\vee (B\rightarrow C))\wedge (A\rightarrow B)$
\\
\hline
0 & 0 & 0 & 1 \\ 
0 & 0 & 1 & 1 \\
0 & 1 & 0 & 0 \\
0 & 1 & 1 & 1 \\
1 & 0 & 0 & 0 \\
1 & 0 & 1 & 0\\
1 & 1 & 1 & 1 \\
1 & 1 & 0 & 1 \\
\hline 
\end{tabular}
\end{minipage}
\begin{minipage}[t]{0.3\textwidth}
   \textbf{Among the weakest preconditions are:} 
\begin{itemize}[nosep]
  \item $(\lnot A \land \lnot B \land \lnot C)$
  \item $(\lnot A \land \lnot B \land C)$
  \item $(\lnot A \land B \land C)$
  \item $(A \land B \land \lnot C)$
  \item $(A \land B \land C)$
\end{itemize}
\end{minipage}
\end{example}

In contrast to the propositional setting,
the situation in first-order logic is significantly more complex.
As an example, consider 
\[
\forall x\, (A(p(x)) \rightarrow A(x))\rightarrow A(0).
\]
Each formula $A(0)$, $A(p(0))$, $ A(p(p(0))) \dots$ is a valid precondition for a conclusion $A(0)$. However, there is no minimal such precondition, as these preconditions form an infinite descending chain under implication, i.e., each is strictly weaker than the previous one. This reflects a fundamental difference from propositional logic: the presence of quantifiers and function symbols in first-order logic leads to an unbounded space of logical dependencies. Thus, there is no accumulation point or limit of the chain of weakening preconditions.
The notion of \emph{weakest precondition} depends crucially on the underlying logical language and its expressiveness.

\begin{example}\label{derivation} Let us now consider incorrect proofs of an expression
   $(\forall x(A(x)\land \exists y B(y))\rightarrow \forall x A(x)) \land \exists x C(x)$.
   
    \begin{center}
\AxiomC{$A(a) \Rightarrow A(a)$}
\UnaryInfC{$A(a)\land \exists yB(y)\Rightarrow A(a)$}
\AxiomC{$\exists yB(y) \Rightarrow \exists yB(y)$} \RightLabel{$\land _l$}
\UnaryInfC{$\exists yB(y) \Rightarrow \exists yC(y)\quad *$} 
\RightLabel{$\land_l,\forall _l$}
\BinaryInfC{$\forall x(A(x)\land\exists yB(y))\Rightarrow A(a)$  \qquad $A(a)\land \exists yB(y)\Rightarrow \exists yC(y)$} \RightLabel{$\forall _l, \forall_r$}
\UnaryInfC{$\forall x(A(x)\land\exists yB(y))\Rightarrow \forall xA(x)$ \qquad $\forall x(A(x)\land \exists yB(y))\Rightarrow \exists yC(y)$} \RightLabel{$\land_r$} \LeftLabel{$\forall _l$} 
\UnaryInfC{$\forall x(A(x)\land\exists yB(y))\Rightarrow \forall xA(x) \land \exists yC(y)$}  \RightLabel{$\limp_r$}
\UnaryInfC{$\Rightarrow\forall x(A(x)\land\exists yB(y))\rightarrow \forall xA(x) \land \exists yC(y)$}
\DisplayProof
\end{center}
\end{example}
Where $*$ denotes incorrect inference step.
The obvious correction is obtained by $\exists yB(y)\rightarrow \exists yC(y)$, but also $\neg \forall x A(x)$  and $C(a)$ are preconditions. 
But how to calculate this in general? 
 But how to calculate a weakest precondition in general?
We need an alternative method: reduce first-order logic to propositional logic using Hilbert’s $\varepsilon$-calculus.

\section{Hilbert’s $\varepsilon$-calculus}\label{sec:epscalculus}
Hilbert's epsilon calculus \cite{hilbertbernays39},\cite{moser2006epsilon},\cite{MatthiasZach}  
is a first proof-theoretic formalism developed in the early 20th century as part of David Hilbert's program to provide a secure foundation for all of mathematics. At the heart of the epsilon calculus is a term-forming $\varepsilon$-operator intended to replace quantifiers with individual terms, thereby simplifying the manipulation of logical formulae. 
$\varepsilon _x\,A(x)$ that stands for “some $x$ satisfying property $A$ (if any).” 
This lets us reason with \emph{terms} instead of manipulating quantifiers step by step.

The \textit{standard translation} of first-order logic into epsilon calculus is defined recursively by mapping atomic formulae to themselves, respecting Boolean connectives, and dealing with quantified formulae as follows: suppose $A'$ is the standard translation of $A$, then we map:
\begin{align*}
\exists x\, A(x) \mapsto A'(\varepsilon_xA'(x)), &&
\forall x\, A(x) \mapsto  A'(\varepsilon_x\neg A'(x))   
\end{align*}
In epsilon calculus, critical formulae of the form \[
A(t) \rightarrow A(\varepsilon_ x A(x)) \quad A(\varepsilon_x\neg A(x))\mapsto  \, A(t)
\] where $t$ is an arbitrary term, play an important role as they express the dependency of an epsilon term on its possible witnesses and ensure correct behavior of replacement. Critical formulae are conservative over propositional logic.

\begin{example}
The standard translation of $\exists x A(x) \land \forall x B(x)$ is $A(\varepsilon_x A(x)) \land B(\varepsilon_x \neg B(x))$.
\end{example}

Note that in the translation to the $\varepsilon$-calculus the $\varepsilon$-terms should be abbreviated as $\varepsilon$-expression might become complicated. 
$[\exists x \exists y \exists z A(x,y,z)]^{\varepsilon } =$
\[ \begin{array}{l}
A(\varepsilon_x A(x,\varepsilon_y A(x,y,\varepsilon_z A(x,y,z)),\varepsilon_z A(x,\varepsilon_y
A(x,y,\varepsilon_z A(x,y,z)),z)), \\
\varepsilon_y A(\varepsilon_x A(x,\varepsilon_y A(x,y,\varepsilon_z A(x,y,z)),\varepsilon_z A(x,\varepsilon_y A(x,y,\varepsilon_z A(x,y,z)),z)),\\
y, \varepsilon_z A(\varepsilon_x A(x,\varepsilon_y A(x,y,\varepsilon_z A(x,y,z)),\varepsilon_z A(x,\varepsilon_y A(x,y,\varepsilon_z A(x,y,z))\\
,z)),y,z)),\varepsilon_z A(\varepsilon_x A(x,\varepsilon_y A(x,y,\varepsilon_z A(x,y,z)),\varepsilon_z A(x,\varepsilon_y A(x,y,\\
\varepsilon_z A(x,y,z)),z)),\varepsilon_y A(\varepsilon_x A(x,\varepsilon_y A(x,y,\varepsilon_z A(x,y,z)),\varepsilon_z A(x,\varepsilon_y A(x,\\
y,\varepsilon_z A(x,y,z)),z)),y,\varepsilon_z A(\varepsilon_x A(x,\varepsilon_y A(x,y,\varepsilon_z A(x,y,z)),\varepsilon_z A(x,\\
\varepsilon_y A(x,y,\varepsilon_z A(x,y,z)),z)),y,z)),z))
\end{array}\]

Now by abbreviation
\begin{align*}
        e_1(y,z)\equiv \varepsilon_x A(x,y,z)\\
       e_2(z)\equiv \varepsilon_y A(e_1(y,z),y,z)\\
       e_3\equiv \varepsilon_z A(e_1(e_2(z)),e_2(z),z)
    \end{align*}
    we obtain 
    $[\exists x \exists y \exists z A(x,y,z)]^{\varepsilon }\equiv A(e_1(e_2(e_3)),e_2(e_3),e_3). $

Regular first-order proofs are line-by-line translated into $\varepsilon$-calculus. For simplicity, we consider regular $LK$-proofs\cite{takeuti2013proof}:
The axiom sequents \(S\) become \([S]^\varepsilon\), structural and propositional rules are promoted unchanged,  weak quantifiers introduce critical formulae: \begin{itemize}[nosep]
  \item \(\exists\)-right: add \(A(t) \rightarrow A(\varepsilon_x A(x))\) to antecedent and 
    \item \(\forall\)-left: add \(\neg A(t) \rightarrow \neg A(\varepsilon_x \neg A(x))\) to antecedent
\end{itemize}
  Strong quantifiers are substituted by corresponding $\varepsilon$-terms \begin{itemize}[nosep]
    \item \(\exists\)-left: substitute \(a \mapsto \varepsilon_x A(x)\)
    \item \(\forall\)-right: substitute \(a \mapsto \varepsilon_x \neg A(x)\)
\end{itemize}
\begin{remark}
Valid $LK$-proofs (cut-free or otherwise), lead to valid $\varepsilon$-proofs\footnote{An $\varepsilon$-proof is valid if the conjunction of all critical formulae implies the conclusion, so that $\bigwedge (A_i(t_i)\rightarrow A_i(\varepsilon_xA_i(x)))\rightarrow A$ is a tautology in itself.}.
Note that  $\varepsilon$-proofs have no eigenvariable condition and contrary to usual first-order calculi, $\varepsilon$-calculus admits an unrestricted deduction theorem. 
Bounds for $\varepsilon$-proof depend only on the number of critical formulae and not on the propositional formulae.
\end{remark}

The following theorems play a crucial role for establishing soundness and completeness for  $\varepsilon$-calculus.
The following theorem describes the procedure that eliminates the critical formulae belonging to the same $\varepsilon$-term.

\begin{theorem} (Hilbert's Ansatz).
Given an $\varepsilon$-proof $\pi_\varepsilon$ of the form 
\begin{align*}
    [(A(t_1)\rightarrow A(\varepsilon_x A(x)))\wedge \dots \wedge( A(t_n)\rightarrow A(\varepsilon_x A(x)))]\rightarrow B(\varepsilon_x A(x))
\end{align*}
the resulting Herbrand's disjunction is 
    $$B(t_1)\vee \dots \vee B(t_n)\vee B(\varepsilon_x A(x))$$
where $\varepsilon_x A(x)$ can be replaced by 
one of the $t_i$ that does not contain an $\varepsilon$-term. In case all of the $t_i$ contain $\varepsilon$-term, one can replace only by some constant -obtaining a proof $\pi$ without $\varepsilon$-term.  
\end{theorem}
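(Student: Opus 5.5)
The plan is a case distinction on the truth value of the critical formulae, treating $e\equiv\varepsilon_x A(x)$ as a propositional atom-forming term. The key point is that validity of the $\varepsilon$-proof is \emph{propositional} validity. We therefore fix an arbitrary interpretation $\mathcal I$ of the atomic formulae occurring in $B(t_1)\vee\dots\vee B(t_n)\vee B(e)$, where $e$ is regarded as an arbitrary closed term. We then show that some disjunct is true, and it suffices to split into two cases.

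\emph{Case 1: $\mathcal I\models A(t_i)$ for some $i$.} We pass to the interpretation $\mathcal I'$ obtained by substituting $t_i$ for $e$. Concretely, we replace $e$ by $t_i$ throughout the given tautology. Since $e$ occurs in it only as a term, substitution of a term for a term that is not properly contained in $t_i$ preserves tautologyhood. This yields the tautology
\[
\bigwedge_{j}\bigl(A(t_j')\rightarrow A(t_i)\bigr)\rightarrow B(t_i),
\]
where $t_j'$ is $t_j$ with $e$ replaced by $t_i$. Under $\mathcal I$, every antecedent $A(t_j')\rightarrow A(t_i)$ is true because $A(t_i)$ is true, so $B(t_i)$ holds. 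The technical point is that $e$ must not occur inside $t_i$. If it does, we first choose a rank/degree-maximal $\varepsilon$-term to eliminate, as in the usual first $\varepsilon$-theorem. We then take $e$ to be of maximal rank, so that it occurs in the $t_j$ only in subordinate positions that are handled simultaneously by the substitution.

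\emph{Case 2: $\mathcal I\models\neg A(t_i)$ for all $i$.} Then every critical formula $A(t_i)\rightarrow A(e)$ is true under $\mathcal I$ regardless of $A(e)$. The tautology therefore gives $\mathcal I\models B(e)$, which is the last disjunct. Combining the two cases, the disjunction $B(t_1)\vee\dots\vee B(t_n)\vee B(e)$ is a tautology. More precisely, it is the result of merging the $n+1$ substitution instances $B(t_1),\dots,B(t_n)$ (from Case 1) and $B(e)$ (from Case 2), each of which is propositionally justified on the relevant part of the truth table.

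Finally, for the claim that $e$ may be replaced, we observe that the disjunct $B(e)$ is only needed on those lines where all $A(t_i)$ are false. There $e$ behaves as an unconstrained term, so any term may be substituted for it while keeping the disjunction a tautology. We choose some $t_i$ free of $\varepsilon$-terms, or, if every $t_i$ contains one, a fresh constant. Iterating this over all $\varepsilon$-terms, in order of decreasing rank, yields an $\varepsilon$-free Herbrand disjunction and hence a proof $\pi$. The main obstacle is the substitution step in Case 1 when $e$ occurs nested in the $t_j$ or in other $\varepsilon$-terms. This is where the rank/degree ordering of the standard $\varepsilon$-elimination (the Hilbert--Ackermann procedure) must be invoked to guarantee termination and well-definedness.
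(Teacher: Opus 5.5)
Your proposal is correct and follows the paper's route, spelled out in more detail: a case distinction on whether some $A(t_i)$ is true, plus the fact that uniformly substituting a term for $\varepsilon_x A(x)$ throughout a tautology yields a tautology. The rank/degree detour in Case~1 is unnecessary, because that substitution principle holds unconditionally and $\varepsilon_x A(x)$ cannot occur inside $A(x)$, so every substituted critical formula has consequent exactly $A(t_i)$ even when $t_i$ contains $\varepsilon_x A(x)$; the same principle, rather than the ``unconstrained lines'' heuristic, also directly justifies the final replacement of $\varepsilon_x A(x)$ by an $\varepsilon$-free $t_i$ or a constant.
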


\begin{proof}
We use case distinction and the tautological nature of epsilon proofs: substitution of a term $t$ for $\varepsilon_x A(x)$ throughout a tautology preserves the tautological structure of $\varepsilon$-proof.
\end{proof}

The procedure can be extended to arbitrarily many types of critical formulae so that no critical formulae become non-critical (soundness) and the procedure terminates in the end.
For that reason we introduce the measures:
        \begin{enumerate}[nosep]
            \item Rank - the length of the overbinding chain (subordination),
            \item Degree - the depth of the nesting of $\varepsilon$-terms (inclusion),
        \end{enumerate}
where $\e$-term $e$ is nested in an $\e$-term $e'$ if $e$ is a proper subterm of $e'$ and $\e$-term $e$ is subordinate to $e'$ if $e$ occurs in $e'$.

\begin{theorem}(Extended First $\varepsilon$-theorem). An $\varepsilon$-proof of the translation of the existential formula can be stepwise transformed into a Herbrand's disjunction.
\end{theorem}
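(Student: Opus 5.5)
The plan is to argue by a double induction on the pair (rank, degree) of the critical $\varepsilon$-terms occurring in the $\varepsilon$-proof, ordered lexicographically, using Hilbert's Ansatz as the single elimination step. We start from an $\varepsilon$-proof of $[\exists x_1\dots\exists x_k A(x_1,\dots,x_k)]^\varepsilon$. By the footnoted notion of validity, the proof amounts to a tautology
\[
\bigwedge_{i}\bigl(A_i(t_i)\rightarrow A_i(\varepsilon_x A_i(x))\bigr)\rightarrow A(s_1,\dots,s_k),
\]
where the $s_j$ are the (abbreviated) $\varepsilon$-terms of the standard translation. Here we may assume that the $\forall$-type critical formulae have already been rewritten as $\exists$-type ones via $\varepsilon_x\neg A(x)$. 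The target is a tautological disjunction $\bigvee_{l} A(r^l_1,\dots,r^l_k)$ whose terms $r^l_j$ contain no $\varepsilon$-terms, that is, a Herbrand disjunction.

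The key steps, in order, are as follows.

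\textbf{Step 1.} Choose an $\varepsilon$-term $e=\varepsilon_x B(x)$ that is critical, i.e.\ heads some critical formula, and that has maximal rank. Among those of maximal rank, choose one of maximal degree, so that $e$ is not a proper subterm of any other critical $\varepsilon$-term of the same rank.

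\textbf{Step 2.} Collect all critical formulae $B(t_1)\rightarrow B(e),\dots,B(t_n)\rightarrow B(e)$ belonging to $e$. Move the remaining critical formulae into the conclusion, using the unrestricted deduction theorem noted in the Remark. This puts the proof into exactly the shape required by Hilbert's Ansatz, with $B$ replaced by the implication ``other critical formulae $\rightarrow$ current disjunction''.

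\textbf{Step 3.} Apply Hilbert's Ansatz, substituting $t_1,\dots,t_n$ and $e$ itself for $e$. In the branch where $B(t_i)$ holds we substitute $t_i$; in the branch where all $B(t_i)$ fail we keep $e$, since then its critical formulae are trivially true. This yields $n+1$ copies of the rest of the proof, joined disjunctively. After redistributing, the result is again a tautology of the form (critical formulae) $\rightarrow$ (disjunction of instances of $A$).

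\textbf{Step 4.} Check that the measure has decreased. No new critical formulae of rank and degree at least those of $e$ may arise, and the number of critical formulae for $e$ must drop to zero. Iterating, the procedure terminates when no critical formulae remain. At that point the remaining $\varepsilon$-terms are replaced by a constant, and the conclusion is the desired Herbrand disjunction.

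I expect Step 4 to be the main obstacle, specifically the soundness clause that ``no critical formulae become non-critical''. Substituting $t_i$ for $e$ inside another critical formula $C(u)\rightarrow C(\varepsilon_y C(y))$ can change the $\varepsilon$-term $\varepsilon_y C(y)$ when $e$ occurs in it. If $\varepsilon_y C(y)$ were subordinate to $e$, the substituted formula would no longer be a critical formula at all. Choosing $e$ of maximal rank is meant to exclude this: a term subordinate to $e$ would have higher rank.

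What remains is to show that the substitution only produces critical formulae whose terms have rank at most that of $e$ and lower degree. I would first prove this by a direct syntactic analysis of how $e$ can occur inside a critical term: either the occurrence is nested, with no bound variable captured, or it is subordinate, with a variable bound from outside. This is the analysis behind the rank/degree measures introduced before the statement. Once that is settled, the well-foundedness of the lexicographic order on the multiset of (rank, degree) pairs gives termination, and each step preserves tautologicity by the Ansatz.
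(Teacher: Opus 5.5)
Your strategy is the paper's: iterate Hilbert's Ansatz on a critical $\varepsilon$-term $e$ of maximal rank $r$ (so critical formulae stay critical) and, among those, of maximal degree $d$ (for termination), with the other critical formulae moved into the conclusion. The soundness half is fine, but you have the direction of subordination reversed. The dangerous case is that $e$ is an instance of an $\varepsilon$-expression subordinate to $\varepsilon_y C(y)$, which would give $\varepsilon_y C(y)$ a rank larger than $r$. The conclusion you draw is the right one.

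The termination half, however, rests on claims that are false as you state them. Eliminating $e$ does create new critical formulae of rank $r$ and degree $d$. Suppose $e'=\varepsilon_y C(y)\neq e$ is another critical term with the same rank and degree, and $C(u)\rightarrow C(e')$ is critical with $e$ occurring in $u$. Then every branch $e:=t_i$ contributes a fresh copy $C(u[e:=t_i])\rightarrow C(e')$. For instance, take the critical formulae $P(c)\rightarrow P(\varepsilon_xP(x))$ and $Q(f(\varepsilon_xP(x)))\rightarrow Q(\varepsilon_xQ(x))$, both terms of rank and degree $1$. Eliminating $\varepsilon_xP(x)$ leaves two critical formulae for $\varepsilon_xQ(x)$. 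In general, $n$ formulae for $e$ plus $m\geq 1$ such formulae for $e'$ become $m(n+1)\geq n+m$. So a multiset of (rank, degree) pairs counted over critical formulae, as your Step 4 wording suggests, never strictly decreases.

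The lemma you plan to prove, that the substitution only produces critical formulae ``whose terms have rank at most that of $e$ and lower degree'', also fails. Besides the copies above, a lower-rank critical term properly containing $e$ becomes $e'[e:=t_i]$. Its degree can exceed $d$, because $t_i$ may contain $\varepsilon$-terms of any degree.

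What you should prove instead concerns distinct critical $\varepsilon$-terms:
\begin{enumerate}[nosep]
\item By maximality of degree, no other critical term of rank $r$ contains $e$. So those terms are literally unchanged; only their critical formulae multiply.
\item $e$ ceases to be critical.
\item Every newly created critical term comes from a critical term properly containing $e$, so its rank is strictly below $r$, whatever its degree. This uses that rank is invariant under replacing the closed subterm $e$ by the closed term $t_i$.
\end{enumerate}
Hence the pair (maximal rank, number of distinct critical $\varepsilon$-terms of that rank), ordered lexicographically, strictly decreases. Equivalently, your multiset of (rank, degree) pairs taken over distinct critical terms decreases.
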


\begin{proof}
Hilbert's Ansatz is repeatedly applied to $\varepsilon$-terms of maximal rank (soundness) and maximal degree (termination) and to the remaining critical formulae. 
\end{proof}
 
The second $\varepsilon$-theorem allows one to reconstruct a proof of the original formulae from Herbrand's disjunction for the prenex formulae. Note that infix quantifiers can be reconstructed as suggested by Baaz, Hetzl, Weller \cite{Hetzl}.

\begin{theorem}(Second $\varepsilon$-theorem).
   If $A$ is a Herbrand disjunction of an existential formula after Skolemization\footnote{Skolemization is necessary to transform arbitrary formulae into existential formulae.}, we obtain a proof of the original prenex formula. 
\end{theorem}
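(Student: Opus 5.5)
The plan is to reverse Skolemization after reconstructing an LK-proof of the Skolemized existential prenex formula from the Herbrand disjunction. Let the original prenex formula be $F = Q_1 x_1 \dots Q_n x_n\, M(x_1,\dots,x_n)$. Its Skolemization is $F^S = \exists x_{i_1}\dots\exists x_{i_k} M^S$, where each universally quantified variable is replaced by a Skolem term $f(\bar x)$ over the preceding existential variables. The hypothesis gives a tautological disjunction $A = \bigvee_{j=1}^m M^S(\bar t_j)$ of instances.

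First I would derive $\Rightarrow F^S$ in LK. The sequent $\Rightarrow M^S(\bar t_1),\dots,M^S(\bar t_m)$ is provable propositionally, since $A$ is a tautology. Repeated $\exists$-right inferences and contractions then yield $\Rightarrow F^S$. This step is routine and corresponds to the soundness direction already implicit in the Extended First $\varepsilon$-theorem.

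The substantive step is deskolemization. We cannot simply quantify the Skolem terms. Instead, the Skolem terms $f(\bar s)$ occurring in the instances must be replaced by fresh eigenvariables $a_{f(\bar s)}$, and the $\exists$-right and $\forall$-right inferences must then be interleaved. I would order all Skolem terms occurring in $\bar t_1,\dots,\bar t_m$ by subterm size, largest first. We reconstruct bottom-up, building the proof from the conclusion $\Rightarrow F$ upward. At each stage we keep a sequent whose succedent consists of partially instantiated subformulas $Q_l x_l\dots M(\bar r, x_l,\dots)$ for the prefixes $\bar r$ of the instance tuples.

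At a universal quantifier we apply $\forall$-right with the eigenvariable $a_{f(\bar r)}$. This choice is legitimate only if $a_{f(\bar r)}$ does not yet occur in the sequent. That is guaranteed if every Skolem term containing $f(\bar r)$ as a subterm has already been eliminated, which is why the terms are ordered by size. At an existential quantifier we apply $\exists$-right together with contraction, duplicating the formula for each distinct witness that appears in the Herbrand disjunction with that prefix. Once every quantifier is stripped, the sequent is exactly $A$ with each Skolem term replaced by its eigenvariable. This replacement is injective on distinct terms, and it respects the propositional structure because Skolem symbols never occur in $M$ itself. Hence the sequent remains a tautology and has a propositional LK-proof.

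The main obstacle is verifying the eigenvariable condition globally. Distinct instances may share prefixes, and the same Skolem term can arise from different branches. The key facts I would establish are these. Identical prefixes $\bar r$ produce the same Skolem term $f(\bar r)$, so the corresponding formulas are handled by a single $\forall$-right inference. A Skolem term $f(\bar r)$ can occur elsewhere in the sequent only inside larger Skolem terms, which under the chosen ordering have already been replaced. To make this rigorous I would argue by induction on the number of remaining Skolem terms, following the proof-theoretic deskolemization procedure cited via Baaz, Hetzl, Weller.
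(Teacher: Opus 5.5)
The paper states this theorem without proof; it quotes the classical result and defers infix quantifiers to Baaz--Hetzl--Weller. So your sketch has to stand on its own. Your strategy is the standard one: replace Skolem terms by eigenvariables and interleave $\forall$-right with $\exists$-right. However, the step you yourself call the main obstacle is handled incorrectly.

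First, your ordering is reversed relative to your direction of construction. Building from $\Rightarrow F$ upward, the $\forall$-right introducing $a_{f(\bar r)}$ must lie \emph{below} every $\exists$-right whose witness contains $a_{f(\bar r)}$. So root-first you must treat Skolem terms by \emph{increasing} size; largest-first is correct only when you build downward from the tautological sequent. Take $F=\exists x\forall y(P(x)\rightarrow P(y))$ with Herbrand disjunction $(P(c)\rightarrow P(f(c)))\vee(P(f(c))\rightarrow P(f(f(c))))$. Working upward and treating $f(f(c))$ first, you must instantiate one copy of $F$ with the witness $a_{f(c)}$ and reach $\Rightarrow F,\ P(a_{f(c)})\rightarrow P(a_{f(f(c))})$. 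The later $\forall$-right with eigenvariable $a_{f(c)}$ then has $a_{f(c)}$ in its lower sequent. Note also that $f(f(c))$ does not occur in the witness tuples $(c)$, $(f(c))$, so you must order all Skolem terms of $A$, not only those in $\bar t_1,\dots,\bar t_m$.

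Second, your claim that $f(\bar r)$ can occur elsewhere only inside larger Skolem terms is false. In the example, $f(c)$ is the free-standing witness for $x$ in the second disjunct. Occurrences inside larger Skolem terms are harmless anyway, since they vanish once maximal Skolem subterms are replaced by variables; the dangerous occurrences are exactly the witness occurrences.

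The missing observation is this: if $f(\bar r)$ occurs in the $k$-th witness $t_{jk}$ of the $j$-th instance, then that instance's $k$-th Skolem term $f_k(t_{j1},\dots,t_{jk})$ properly contains $f(\bar r)$ and is strictly larger. So start from the tautological sequent of replaced instances and work downward by decreasing size, taking innermost first within a block of universal quantifiers. For each $f_k(\bar r)$:
\begin{enumerate}[nosep]
\item Contract the formulas coming from instances with prefix $\bar r$. They are now identical, because their deeper Skolem terms are larger and hence already processed.
\item Apply $\forall$-right with $a_{f_k(\bar r)}$.
\item Immediately apply $\exists$-right abstracting the $k$-th witness.
\end{enumerate}
Every witness containing $a_{f_k(\bar r)}$ has then already been abstracted. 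Moreover, $a_{f_k(\bar r)}$ cannot occur in the replaced $r_1,\dots,r_k$, since $f_k(\bar r)$ is not a subterm of its own arguments. This is what secures the eigenvariable condition. Your preliminary derivation of $\Rightarrow F^S$ plays no role in the argument.
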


Now let us go back to the Example \ref{derivation} of an incorrect proof and retranslate it into $\varepsilon$-calculus:\\
$e_1\equiv \varepsilon_x\neg A(x)$\\
$e_2\equiv \varepsilon_yB(y)$\\
$e_3\equiv \varepsilon_yC(y)$\\
 $e_4\equiv \varepsilon_x\neg (A(x)\land B(e_2))$
 \begin{center}
\AxiomC{ \qquad \qquad \qquad \qquad $B(e_2) \Rightarrow B(e_2)$}\RightLabel{$*$}
\UnaryInfC{ $A(e_1) \Rightarrow A(e_1)$ \qquad $B(e_2) \Rightarrow C(e_3)$} 
\UnaryInfC{$A(e_1)\land B(e_2) \Rightarrow A(e_1)$ \qquad $A(e_1)\land B(e_2) \Rightarrow C(e_3)$}
\UnaryInfC{$A(e_4)\land B(e_2) \Rightarrow A(e_1)$ \qquad $A(e_4)\land B(e_2) \Rightarrow C(e_3)$}
\UnaryInfC{$A(e_4)\land B(e_2) \Rightarrow A(e_1)\land C(e_3)$ }
\UnaryInfC{$\Rightarrow A(e_4)\land B(e_2) \rightarrow A(e_1)\land C(e_3)$ }
\DisplayProof 
 \end{center}
 The $\varepsilon$-proof is 
 \begin{center}
     $[\neg (A(e_1)\land B(e_2))\rightarrow \neg(A(e_4)\land B(e_2))]\limp A(e_4)\land B(e_2)\limp A(e_1)\land C(e_3).$
 \end{center}

\begin{minipage}[t]{0.4\textwidth}
\centering
\scriptsize
    \begin{tabular}{|c|c|c|c||c|c|c||c|}
\hline
$A_1$ & $B_2$ & $C_3$ & $A_4$ 
& $A_1 \land B_2 \rightarrow A_4 \land C_3$
& $A_4 \land B_2 \rightarrow A_1 \land C_3$ 
& Full Formula \\
\hline
0 & 0 & 0 & 0 & 1 & 1 & 1 \\
0 & 0 & 0 & 1 & 1 & 1 & 1 \\
0 & 0 & 1 & 0 & 1 & 1 & 1 \\
0 & 0 & 1 & 1 & 1 & 1 & 1 \\
0 & 1 & 0 & 0 & 1 & 1 & 1 \\
0 & 1 & 0 & 1 & 1 & 0 & 0 \\
0 & 1 & 1 & 0 & 1 & 1 & 1 \\
0 & 1 & 1 & 1 & 1 & 0 & 0 \\
1 & 0 & 0 & 0 & 1 & 1 & 1 \\
1 & 0 & 0 & 1 & 1 & 1 & 1 \\
1 & 0 & 1 & 0 & 1 & 1 & 1 \\
1 & 0 & 1 & 1 & 1 & 1 & 1 \\
1 & 1 & 0 & 0 & 0 & 0 & 1 \\
1 & 1 & 0 & 1 & 0 & 0 & 1 \\
1 & 1 & 1 & 0 & 0 & 0 & 1 \\
1 & 1 & 1 & 1 & 1 & 1 & 1 \\
\hline
\end{tabular}
\end{minipage}
\vspace{4mm}

A weakest precondition is e.g. 
 \begin{center}
    $A(e_1) \land A(e_4) \land  B(e_2)  \land C(e_3)$
 \end{center}
which retranslated to first-order language is 
\begin{center}
     $\forall xA(x) \land \forall x(A(x) \land \exists y B(y))\land \exists x C(x)$
 \end{center}

The following theorem highlights a remarkable property of the $\varepsilon$-calculus: even when the $\varepsilon$-proof is not a tautology, and thus the application of the $\varepsilon$-theorem does not yield a classically valid result, the degree of failure is preserved in a highly controlled way.



\begin{remark}
Various notions of weakest (or preferred) preconditions can be obtained by changing the underlying ordering (or preference) criterion; 
studying these alternatives is a topic in its own right. Howevevr, 
the term \emph{"weakest"} is used here with respect to logical implication:
a precondition \(P\) is weaker than \(Q\) if 
$P\rightarrow Q$ is valid but $Q\rightarrow P$ is not valid.

This semantic notion should not be confused with Boolean minimization,
as performed, for example, by the Quine--McCluskey algorithm \cite{Quine1952,McCluskey1956,Quine1955} optimizing Boolean formulas with respect to syntactic criteria.
Such
methods seek a smaller representation of the same Boolean function.

For instance,
\[
  (\neg x \land \neg y)\lor(\neg x\land y)
  \equiv \neg x,
\]
so replacing the formula on the left by \(\neg x\) is a minimization
of form, but not a genuine logical weakening. By contrast, replacing
the sufficient condition \(\neg x\land\neg y\) by the still-sufficient
condition \(\neg x\) is a genuine weakening, since
\[
  (\neg x\land\neg y )\rightarrow \neg x,
  \qquad
  \neg x \not\rightarrow (\neg x\land\neg y).
\]
Thus, our construction optimizes with respect to logical weakness,
rather than the syntactic size of a formula or circuit (minimality of representation).
 The central point of the present work, however, is that the relevant optimal preconditions exist and can be effectively calculated in the propositional setting. The role of the epsilon calculus is to extend this existence-and-construction result to first-order logic, where quantifiers and witness dependencies prevent a direct application of purely propositional optimization methods.
\end{remark}

\begin{remark}
  Note that sometimes the extended first epsilon theorem acts as a correcting device, more precisely: if a sentence does not contain $\varepsilon$-terms which correct the $\varepsilon$-proof, then this formula corrects Herbrand's disjunction. 
\end{remark}

  \begin{example}
        $(A(t)\supset A(e))\supset (A(e)\supset A(t))$ with $t$ not containing $e$, is not correct, but its Herbrand disjunction $A(t)\supset A(t)$ is correct.
    \end{example}

\begin{remark}

Note that number theories can be expressed using two types of critical formulae
\begin{itemize}[nosep]
    \item Standard critical formulae: $A(t) \rightarrow A(\varepsilon_ x A(x))$
    where $t$ contains no $\varepsilon$-terms.
    
    \item Inductive critical formulae: $A(t) \rightarrow A( \varepsilon _x A(x)\leq t )$
\end{itemize}
The second type introduces an arithmetic bound on the epsilon term and arises naturally in formulations of the induction schema. In fact, this idea played a historical role in Hilbert’s program: such bounded epsilon terms were used in early attempts to prove the consistency of number theory. 
In number theory, not all $\e$-proofs are predicative. Proofs involving bounded quantification (least-number principles) typically lead to bounded $\e$-terms. These may include $\e$-terms inside bounds, causing circularity and failure of $\e$-elimination.
The case distinction no longer terminates and the structure collapses or diverges based on whether $\e$-terms enter their own bounds. 
This shows that some arithmetic formulae do have Herbrand disjunctions, but only those corresponding to predicative $\e$-proofs - where every critical formula is of the form
\[
A(t) \rightarrow A(\varepsilon_x A(x))
\]
where the term $t$ contains no $\varepsilon$-terms at the moment of evaluation according to the Extended First $\varepsilon$-theorem.

\end{remark}

\section{Conclusion}\label{sec:conclusion}
The results of this paper are work in progress.
We computationally analyze incorrect and incomplete proofs in the epsilon calculus and investigate how weakest preconditions can be used to restore correctness. Our focus lies in connecting syntactic correction (via epsilon terms and their instances) with semantic interpretations (truth tables, conjunctions of literals), and in comparing with first-order and intuitionistic logic.

Note that errors in the epsilon proof are promoted to errors in the Herbrand disjunction in a controlled way. Under some circumstances the extended first epsilon theorem is extremely false-tolerant: degree of falsity is preserved.
We would like to conclude our paper with the most relevant open questions: what is the retranslation of an $\e$-expressions to first-order expressions?
   \begin{align*}
       P(\varepsilon_x\neg P(x,x), \varepsilon_x\neg P(x,x))\rightarrow P(\varepsilon_x\neg P(x,x), \underbrace{\varepsilon_y P(\varepsilon_x\neg P(x,x),y)}_{e})
   \end{align*} 
   $e$ has no meaning in first-order logic.
   Any satisfying answer implies the solution to the following question: Is there an elementary retranslation of $LK^\e$-proofs with cuts in $LK$-proofs with cuts (for a partial solution see \cite{Translations})?

%
%
%
%
\bibliographystyle{plain}
\bibliography{references}

\end{document}